\documentclass[conference]{IEEEtran}

%
\usepackage{cite}

{

%
\ifCLASSINFOpdf
  \usepackage[pdftex]{graphicx}
\else
  \usepackage[dvips]{graphicx}

\fi
\usepackage{epstopdf}
\usepackage{amsmath}
\usepackage{amsthm}
\usepackage{stmaryrd}
\usepackage{amsfonts}
\usepackage{algorithmic}
\usepackage{algorithm}
\usepackage{mathrsfs}
\usepackage{array}
\usepackage{color}

\hyphenation{op-tical fo-llowing net-works semi-conduc-tor}

\newtheorem{theorem}{Theorem}
\newtheorem{lemma}{Lemma}

\newcolumntype{M}[1]{>{\centering\arraybackslash}m{#1}}

\title{A New Lower Bound on the Ergodic Capacity of Optical MIMO Channels}

\author{\IEEEauthorblockN{R\'emi Bonnefoi, Amor Nafkha}
\IEEEauthorblockA{CentraleSup\'elec/IETR, CentraleSup\'elec Campus de Rennes, 35510 Cesson-S\'evign\'e, France\\
Email:\{remi.bonnefoi, amor.nafkha\}@centralesupelec.fr}}

\begin{document}
\maketitle

\begin{abstract}
In this paper, we present an analytical lower bound on the ergodic capacity of optical multiple-input multiple-output (MIMO) channels.  It turns out that the optical MIMO channel matrix which couples the $m_t$ inputs (modes/cores) into $m_r$ outputs (modes/cores) can be modeled as a sub-matrix of a $m \times m$ Haar-distributed unitary matrix where $m > m_t, m_r$. Using the fact that the probability density of the eigenvalues of a random matrix from unitary ensemble can be expressed in terms of the Christoffel-Darboux kernel. We provide a new analytical expression of the ergodic capacity as function of signal-to-noise ratio (SNR). Moreover, we derive a closed-form lower-bound expression to the ergodic capacity. In addition, we also derive an approximation to the ergodic capacity in low-SNR regimes. Finally, we present numerical results supporting the expressions derived.
\end{abstract}

\section{Introduction}

With the advent of massive multiple-input multiple-output (MIMO) technologies and the development of Internet of Things (IoT), the fifth generation cellular networks (5G) should be supported by a high quality back-haul. This back-haul can be realized through both wired and wireless technologies. Among all the possible solutions, the deployment of optical fibers is the one that ensures the greatest throughput while maintaining a high level of reliability. Moreover, space-division multiplexing (SDM) based on multi-core/multi-mode optical fiber can significantly increase the capacity limit of optical fibers \cite{MMF,SDM} and it overcomes the capacity crunch. To take full advantage of the potential throughput increase promised by SDM, the in-band crosstalk must be properly managed. This can be done using MIMO signal processing techniques. Moreover, assuming negligible back scattering and near loss-less propagation,  the propagation channel in an optical fiber can be modeled by a complex random unitary matrix.

For wireless communications, the Rayleigh fading model \cite{Telatar} is used to model the MIMO channel, in this case, the entries of the channel matrix can be modeled by an independent and identical distributed zero mean Gaussian complex numbers. Moreover, the matrices $\textbf{H}^{\dag}\textbf{H}$ are referred to as uncorrelated Wishart matrices, where $(.)^{\dag}$ is the complex trans-conjugate. In optical fiber, the channel matrix can be modeled by a Haar distributed matrix \cite{Winzer}. Therefore, the matrices follow the Jacobi unitary ensemble. In \cite{Dar} the ergodic capacity of the Jacobi MIMO channel was expressed as an integral and sum of Jacobi polynomials. 

In this article, the  ergodic capacity formula is first rewritten to show that the polynomial part of the integrand consists of the Christoffel-Darboux kernel \cite{Andrews}. Then, the Christoffel-Darboux formula is used to reword the expression of the ergodic capacity of loss-less SDM optical fiber channel. Finally, the derived new expression is used to propose a lower-bound of the ergodic capacity. We finally derive a low SNR approximation of the capacity. According to \cite{Winzer} and \cite{Dar} as well as authors knowledge, no existing work addresses the lower-bound and low-SNR approximation of the ergodic capacity of the Jacobi MIMO channel.

The rest of this paper is organized as follows. The system model is introduced in Section II. In Section III, the Christoffel-Darboux formula is used to derive a new expressions for the ergodic capacity and its lower bound. Some numerical results are provided to validate the accuracy of the derived expressions in Section IV. Finally, Section V provides a conclusion.

\section{System Model}

In an optical MIMO-SDM system with $m_t$ transmit cores/modes and $m_r$ receiving cores/modes, the expression of the received signal can be expressed as \cite{Winzer}:

\begin{equation}
\textbf{y} =\textbf{H}\textbf{x}+\textbf{n}
\end{equation}

Where $\textbf{y}\in \mathbb{C}^{m_r \times 1}$ is the vector of received symbols, $\textbf{x}\in \mathbb{C}^{m_t \times 1}$ vector of transmitted symbols, $\textbf{n}\in \mathbb{C}^{m_r \times 1}$ is the Gaussian with zero mean and unitary variance noise vector and $\textbf{H}\in \mathbb{C}^{m_r \times m_t}$ is the channel matrix.

A multi-modes/multi-cores optical fiber composed by $m$ modes/cores can be modeled by a $m \times m$ matrix denoted $\textbf{G}$. In the case of $m_t,\, m_r \leq m$, if $m_t$ ($m_r$) modes/cores are excited for transmission (reception), the channel matrix $\textbf{H}$ is a $m_r \times m_t$ block of $\textbf{G}$. Without loss of generality, we can take the upper-left corner of $\textbf{G}$ \cite{Collins}. In optical fiber, $\textbf{H}$ can be modeled as a random Haar distributed matrix \cite{Dar}. This matrix can be achieved by orthonormalizing an independent identically distribution random Gaussian matrix. Moreover, if $\textbf{H}$ is a Haar distributed matrix, $\textbf{H}^{\dag}\textbf{H}$ follows the Jacobi Unitary Ensemble (JUE).

Let us recall the definition of  the Jacobi polynomials as follows. The Jacobi polynomial of degree $n$ is denoted $P_n^{a,b}(x)$. Those polynomials are orthogonal with respect to the inner product:

\begin{equation}
\langle P(x)|Q(x)\rangle =\int_{-1}^1 (1-x)^{a} (1+x)^{b} P(x)Q(x) \; dx
\end{equation}

More generally, with this inner product a Jacobi polynomial of degree $n$ is orthogonal with all polynomials of degree $n-1$ or lower.

Furthermore, we suppose that the receiver has complete channel state information (CSI) and that the transmitter has no CSI. In this case, the definition of the ergodic capacity of the channel is given by \cite{Telatar, Winzer, Dar}:

\begin{equation}
\label{eq:capdef}
C_{m_t,m_r}^{m,\rho} = \mathbb{E}_\textbf{H} \left\{ \log_2\left(\det(\textbf{I}_{m_t}+\rho \textbf{H}^{\dag}\textbf{H}) \right) \right\}
\end{equation}

Where $\mathbb{E}_\textbf{H}$ is the expectation with respect to the density of the eigenvalues of $\textbf{H}^{\dag}\textbf{H}$  and $\rho$ the signal-to-noise-ratio (SNR) which can be defined by the ratio between the variance of the received power and the variance of the noise.

When the total number of antennas exceeds the number of modes/cores in the fiber, $m_r+m_t>m$, the ergodic capacity can be expressed as \cite{Dar}:

\begin{equation}
\label{eq:capmsup}
C_{m_t,m_r}^{m,\rho} = (m_t+m_r-m)\log_2(1+\rho)+C_{m-m_r,m-m_t}^{m,\rho}
\end{equation}

Thus, without loss of generality, in this paper, our study will be limited to the case $m_r+m_t\leq m$, and the expression of the ergodic capacity is given by \cite{Dar}

\begin{multline}
\label{eq:cap1}
	C_{m_t,m_r}^{m,\rho} = \int_0^1 \lambda^{a} (1- \lambda)^{b} \log_{2}(1 + \lambda \rho)\\
\times \sum_{k=0}^{r-1} B_{k,a,b}^{-1} \left[P_{k}^{a,b}(1-2 \lambda) \right]^{2}d\lambda
\end{multline}

	Where $P^{a,b}_{k}(x)$ is a Jacobi polynomial of degree $k$ and:

\begin{align}	
	B_{k,a,b} & = \frac{\vert \vert P_{k}^{a,b}(x)\vert\vert^2}{2^{a+b+1}} \nonumber\\
	& = \frac{1}{2k+a+b+1} {2k+a+b\choose k}{2k+a+b\choose k+a}^{-1} 
\end{align}

Where $ {n \choose k}=\frac{n!}{k!(n-k)!}$ denotes the binomial coefficient, and $a = |m_{r}-m_{t}|$, $b = m-m_{r}-m_{t}$, $r = min\{m_{r},m_{t}\}$

\section{New Expression of the Ergodic Capacity}

Moreover, the expression of the diagonal Christoffel-Darboux kernel is given by \cite{Andrews}:

\begin{equation}
		K_{n}(x,x) = \sum_{k=0}^{n}\frac{1}{\|p_{k}\|^{2}}p_{k}(x)^2
\end{equation}

And the Christoffel-Darboux formula:

\begin{equation}
\label{eq:CRformula}
	K_n(x,x)=\frac {k_n}{k_{n+1}\|p_{n}\|^{2}}\ (p'_{n+1}(x)p_n(x)-p'_n(x)p_{n+1}(x))
\end{equation}

Where $k_n$ is the leading coefficient of the orthogonal polynomial $p_n$. 

With the variable change $x = 1-2\lambda$, the expression of the ergodic capacity \eqref{eq:cap1} can be rewritten:

\begin{multline}
\label{eq:cap2}
	C_{m_t,m_r}^{m,\rho} = \int_{-1}^{1} (1-x)^{a} (1+x)^{b}\log_{2}\left(1 +\frac{\rho(1-x)}{2}\right) \\
	\times\sum_{k=0}^{r-1}  
\frac{\left[P_{k}^{a,b}(x) \right]^{2}}{\|P_{k}^{a,b}\|^{2}}dx
\end{multline}

The first theorem of this paper is obtained by applying the Christoffel Darboux formula for Jacobi polynomials on \eqref{eq:cap2} and then by computing the derivatives of the polynomials.

\begin{theorem}
\label{th:1}
For $m_r+m_t\leq m$ with perfect CSI at the receiver and no CSI at the transmitter, the expression of the ergodic capacity of a Jacobi MIMO channel is:

\begin{multline}
\label{eq:cap3}
	C_{m_t,m_r}^{m,\rho} = M_{r}^{a,b} \int_{-1}^{1} (1-x)^{a} (1+x)^{b}\log_{2}\left(1 +\frac{\rho(1-x)}{2}\right)\\
\times \left[ P_{r-1}^{a,b}(x)P_{r-1}^{a+1,b+1}(x)-N_{r}^{a,b}P_{r}^{a,b}(x)P_{r-2}^{a+1,b+1}(x) \right]dx
\end{multline}

Where, $M_{r}^{a,b} = \frac{(r+a+b+1)!r!}{2^{a+b+1}(r+a-1)!(r+b-1)!(2r+a+b)}$, and $N_{r}^{a,b}=\frac{({r}+a+b)}{({r}+a+b+1)}$.

\end{theorem}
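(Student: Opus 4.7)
The strategy is to recognize the polynomial sum inside the integrand of \eqref{eq:cap2} as the diagonal Christoffel-Darboux kernel $K_{r-1}(x,x)$ built from the Jacobi polynomials $\{P_{k}^{a,b}\}_{k\ge 0}$ (which are orthogonal with respect to the weight $(1-x)^{a}(1+x)^{b}$ on $[-1,1]$), collapse the kernel via formula \eqref{eq:CRformula}, and then eliminate the derivatives using the standard identity
\begin{equation*}
\frac{d}{dx}P_{n}^{a,b}(x)=\frac{n+a+b+1}{2}\,P_{n-1}^{a+1,b+1}(x).
\end{equation*}

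First I would apply \eqref{eq:CRformula} with $n=r-1$ and $p_{k}=P_{k}^{a,b}$, which replaces the sum by
\begin{equation*}
\frac{k_{r-1}}{k_{r}\,\|P_{r-1}^{a,b}\|^{2}}\Bigl((P_{r}^{a,b})'(x)\,P_{r-1}^{a,b}(x)-(P_{r-1}^{a,b})'(x)\,P_{r}^{a,b}(x)\Bigr).
\end{equation*}
Then I would substitute the derivative identity above for $(P_{r}^{a,b})'$ and $(P_{r-1}^{a,b})'$, producing $P_{r-1}^{a+1,b+1}$ and $P_{r-2}^{a+1,b+1}$ respectively. Factoring out $(r+a+b+1)/2$ from both terms, the ratio $(r+a+b)/(r+a+b+1)$ naturally appears in the second term, which is precisely $N_{r}^{a,b}$. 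The two-polynomial structure of \eqref{eq:cap3} is thus already in place.

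The remaining work is to verify that the scalar prefactor equals $M_{r}^{a,b}$. For this I would plug in the explicit expressions for Jacobi polynomials, namely the leading coefficient $k_{n}=2^{-n}\binom{2n+a+b}{n}$ and the squared norm $\|P_{n}^{a,b}\|^{2}=\frac{2^{a+b+1}}{2n+a+b+1}\cdot\frac{(n+a)!(n+b)!}{n!(n+a+b)!}$ (valid since $a=|m_{r}-m_{t}|$ and $b=m-m_{r}-m_{t}$ are non-negative integers), and simplify $\frac{(r+a+b+1)k_{r-1}}{2k_{r}\|P_{r-1}^{a,b}\|^{2}}$. The main, though purely mechanical, obstacle is the factorial telescoping: one must recognise $(r+a+b+1)(r+a+b)(r-1+a+b)!=(r+a+b+1)!$ and $r\cdot(r-1)!=r!$ in order to match the compact form stated for $M_{r}^{a,b}$.

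Once the scalar is identified, substituting the resulting expression for $K_{r-1}(x,x)$ back into \eqref{eq:cap2} yields \eqref{eq:cap3}, completing the proof.
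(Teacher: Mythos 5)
Your proposal is correct and follows exactly the route the paper indicates: identify the sum in \eqref{eq:cap2} as $K_{r-1}(x,x)$, collapse it via the Christoffel--Darboux formula \eqref{eq:CRformula}, eliminate the derivatives with $\frac{d}{dx}P_{n}^{a,b}=\frac{n+a+b+1}{2}P_{n-1}^{a+1,b+1}$, and match the constant to $M_{r}^{a,b}$ using the standard leading coefficient and norm of Jacobi polynomials. The factorial bookkeeping you describe does check out, yielding $\frac{(r+a+b+1)k_{r-1}}{2k_{r}\|P_{r-1}^{a,b}\|^{2}}=M_{r}^{a,b}$ and the ratio $N_{r}^{a,b}$ in the second term, so nothing is missing.
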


In case where $r$ is large ($r\geq 3$), this new expression is more convenient than \eqref{eq:cap1} for numerical evaluations. Indeed, the computation of the capacity with \eqref{eq:cap1} requires the computation of the sum of $r-1$ products of Jacobi polynomials, whereas, in \eqref{eq:cap3} this sum is replaced by the difference between two products.

The new expression expressed in theorem \ref{th:1} can be used to derive a lower bound for the capacity:

\begin{lemma}
\label{lem:1}
If $m_r+m_t \leq m$, the ergodic capacity of the Jacobi MIMO channel is lower bounded by:

\begin{multline}
\label{eq:lb}
	Lb_{m_t,m_r}^{m,\rho} = M_{r}^{a,b} \int_{-1}^{1} (1-x)^{a} (1+x)^{b}\log_{2}\left(1 +\frac{\rho(1-x)}{2}\right)\\
	 \times P_{r-1}^{a,b}(x)P_{r-1}^{a+1,b+1}(x)dx
\end{multline}
\end{lemma}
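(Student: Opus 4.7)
The strategy is to show that the term dropped in passing from \eqref{eq:cap3} to \eqref{eq:lb} contributes non-positively. Subtracting \eqref{eq:lb} from \eqref{eq:cap3} and using that $M_r^{a,b}$ and $N_r^{a,b}$ are both positive, the lemma reduces to the single scalar inequality
\begin{equation*}
I \;:=\; \int_{-1}^{1}(1-x)^{a}(1+x)^{b}\,\log_2\!\Bigl(1+\tfrac{\rho(1-x)}{2}\Bigr)\,P_r^{a,b}(x)\,P_{r-2}^{a+1,b+1}(x)\,dx \;\le\; 0,
\end{equation*}
with $W(x)=(1-x)^a(1+x)^b$ and $f(x)=\log_2(1+\rho(1-x)/2)$.

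The plan is to exploit the $W$-orthogonality of $P_r^{a,b}$ together with the smoothness of $f$. Because $P_r^{a,b}$ is $W$-orthogonal to every polynomial of degree at most $r-1$, the factor $f(x)P_{r-2}^{a+1,b+1}(x)$ can be replaced by $f(x)P_{r-2}^{a+1,b+1}(x)-Q(x)$ for any such polynomial $Q$ without altering $I$. Taking $Q$ to be the Lagrange interpolant of $fP_{r-2}^{a+1,b+1}$ at the $r$ simple zeros of $P_r^{a,b}$ in $(-1,1)$ and invoking the classical remainder identity transforms $I$ into
\begin{equation*}
I \;=\; \frac{1}{r!\,k_r^{a,b}}\int_{-1}^{1} W(x)\bigl[P_r^{a,b}(x)\bigr]^{2}\,\Phi^{(r)}\!\bigl(\xi(x)\bigr)\,dx,
\end{equation*}
where $\Phi(t):=f(t)P_{r-2}^{a+1,b+1}(t)$, $k_r^{a,b}>0$ is the leading coefficient of $P_r^{a,b}$, and $\xi(x)\in(-1,1)$. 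Since $W[P_r^{a,b}]^{2}\ge 0$, proving $I\le 0$ reduces to showing that $\Phi^{(r)}$ is non-positive on $(-1,1)$.

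The main obstacle is precisely this sign estimate. A direct computation gives $f^{(k)}(x)=-\tfrac{(k-1)!\,(\rho/2)^{k}}{\ln 2\,(1+\rho(1-x)/2)^{k}}<0$ for every $k\ge 1$, so all derivatives of the logarithm carry the same (negative) sign. Leibniz's rule then yields
\begin{equation*}
\Phi^{(r)}(x)\;=\;\sum_{k=2}^{r}\binom{r}{k}\,f^{(k)}(x)\,\bigl(P_{r-2}^{a+1,b+1}\bigr)^{(r-k)}(x),
\end{equation*}
the indices $k<2$ dropping because $P_{r-2}^{a+1,b+1}$ has degree $r-2$. The $k=2$ contribution is manifestly negative, as it pairs $f''<0$ with the positive leading coefficient of a Jacobi polynomial; however, for $k>2$ the polynomial factor $(P_{r-2}^{a+1,b+1})^{(r-k)}$---which, via the recursion $(P_{n}^{a,b})'=\tfrac{n+a+b+1}{2}P_{n-1}^{a+1,b+1}$, equals a positive constant times the shifted Jacobi polynomial $P_{k-2}^{a+r-k+1,\,b+r-k+1}(x)$---oscillates, so no pointwise sign is immediate. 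Closing the argument therefore requires a delicate analysis showing that, once integrated against $W[P_r^{a,b}]^{2}$, the oscillating contributions either cancel against one another (via orthogonality of the shifted Jacobi families under their natural weights) or are dominated by the $k=2$ piece. This sign accounting is the technical heart of the proof, while the earlier reductions by orthogonality and the Lagrange remainder are standard.
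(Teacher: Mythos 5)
Your reduction of the lemma to the single inequality $I\le 0$ (the paper calls this quantity $Q$) is exactly the paper's first step, and the quadrature-type remainder identity you derive---replace $\Phi=f\,P_{r-2}^{a+1,b+1}$ by its Lagrange interpolant at the $r$ zeros of $P_r^{a,b}$, invoke $W$-orthogonality, and apply the interpolation remainder formula---is correct. The problem is that your argument then stands or falls on the pointwise estimate $\Phi^{(r)}\le 0$ on $(-1,1)$, and you do not prove it: you observe that only the $k=2$ Leibniz term has an evident sign and explicitly defer the rest as ``the technical heart of the proof.'' As submitted, the proposal is therefore not a proof; it trades the lemma for an unproved, and a priori not obviously true, inequality about the $r$-th derivative of $f\cdot P_{r-2}^{a+1,b+1}$.

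The gap can in fact be closed, and it is worth recording how, because the estimate is true. Writing $f^{(k)}(x)=-(k-1)!\,t^{k}/\ln 2$ with $t=\frac{\rho/2}{1+\rho(1-x)/2}>0$, substituting $j=r-k$ in your Leibniz sum and using $\frac{1}{r-j}=\int_0^1 u^{\,r-j-1}\,du$, the sum collapses via the exact (finite) Taylor expansion of the degree-$(r-2)$ polynomial $P_{r-2}^{a+1,b+1}$ to
\[
\Phi^{(r)}(x)=-\frac{r!}{\ln 2}\int_0^1\frac{(tu)^{r}}{u}\,P_{r-2}^{a+1,b+1}\Bigl(x+\frac{1}{tu}\Bigr)\,du ,
\]
and since $x+\frac{1}{tu}\ge x+\frac{1}{t}=1+\frac{2}{\rho}>1$ lies to the right of every zero of $P_{r-2}^{a+1,b+1}$, the integrand is positive and $\Phi^{(r)}<0$; no cancellation analysis between the oscillating terms is needed. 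For comparison, the paper disposes of $Q$ differently and much more briefly: it bounds $\log_2(1+u)\le u/\ln 2$ and shows that the linearized integral $\int_{-1}^1(1-x)^{a+1}(1+x)^{b}P_r^{a,b}(x)P_{r-2}^{a+1,b+1}(x)\,dx$ vanishes identically by the contiguous relation \eqref{eq:form1} and orthogonality. Note, however, that the paper's pointwise bound on the logarithm is multiplied by the sign-changing product $P_r^{a,b}P_{r-2}^{a+1,b+1}$, so that step faces exactly the sign-bookkeeping difficulty you ran into; your route, once completed as above, is the more airtight of the two, at the price of considerably heavier machinery.
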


\begin{proof}
To prove that \eqref{eq:lb} is a lower-bound of the capacity, it is sufficient to prove that the quantity:

\begin{multline}
Q = \int_{-1}^{1} (1-x)^{a} (1+x)^{b}\log_{2}\left(1 +\frac{\rho(1-x)}{2}\right)\\
	\left[ P_{r}^{a,b}(x)P_{r-2}^{a+1,b+1}(x) \right]dx
\end{multline}

is negative.

First, the use of the inequality of convexity of the logarithm $\log(1+x)\leq x$ leads to:

\begin{equation}
Q \leq \frac{\rho}{2 \ln(2)} \int_{-1}^{1} (1-x)^{a+1} (1+x)^{b}\left[ P_{r}^{a,b}(x)P_{r-2}^{a+1,b+1}(x) \right]dx
\end{equation}

Moreover, the link between the degree and the coefficients of Jacobi polynomials can be made with \cite{NIST:DLMF}:

\begin{equation}
\label{eq:form1}
P_{n}^{a-1,b}(x) = \frac{(n+a+b)}{(2n+a+b)}P_{n}^{a,b}(x)-\frac{(n+b)}{(2n+a+b)}P_{n-1}^{a,b}(x)
\end{equation}

\eqref{eq:form1} and the orthogonality of Jacobi polynomials allow us to conclude that:

\begin{equation}
\int_{-1}^{1} (1-x)^{a+1} (1+x)^{b}\left[ P_{r}^{a,b}(x)P_{r-2}^{a+1,b+1}(x) \right]dx = 0
\end{equation}

This finally proves that $Q \leq 0$. This last result proves lemma \ref{lem:1}.

Moreover, at low SNR, $\log(1+x)\approx x$ and thus $Q \approx 0$, thus, at low SNR,

\begin{equation}
  Lb(m_{t},m_{r},m,\rho)\approx C(m_{t},m_{r},m,\rho)
\end{equation}
\end{proof}

We can note that, at low SNR, $\log(1+x)\approx x$ and thus $Q \approx 0$ and the proposed lower bound is a good approximation of the capacity.

The novel expression of the ergodic capacity derived in theorem \ref{th:1} can also be used to propose a low SNR approximation of the ergodic capacity. Indeed, at low SNR, the ergodic capacity can be approximated by a very simple linear expression.

\begin{lemma}
\label{lem:2}

At low SNR, the expression of the ergodic capacity of the Jacobi MIMO channel is:

\begin{equation}
C_{m_t,m_r}^{m,\rho} \approx  \frac{\rho m_rm_t}{m \ln(2)} \qquad \rho \ll 1
\end{equation}
\end{lemma}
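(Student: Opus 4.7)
My plan is to linearize the log-determinant in $\rho$ and evaluate the resulting first moment directly from the Haar structure of $\textbf{H}$, rather than going through the Christoffel-Darboux machinery of Theorem~\ref{th:1}. Starting from the definition \eqref{eq:capdef}, diagonalizing $\textbf{H}^{\dag}\textbf{H}$ and using $\log_2(1+u)=u/\ln 2 + O(u^2)$ uniformly for $u\in[0,\rho]$, I get
\begin{equation}
C_{m_t,m_r}^{m,\rho} = \frac{\rho}{\ln 2}\,\mathbb{E}_{\textbf{H}}\!\left[\operatorname{tr}(\textbf{H}^{\dag}\textbf{H})\right] + O(\rho^2),
\end{equation}
since the nonzero eigenvalues $\lambda_i$ of $\textbf{H}^{\dag}\textbf{H}$ lie in $[0,1]$ almost surely, so the remainder is uniformly bounded in $\textbf{H}$ and the expectation of the remainder is itself $O(\rho^2)$.

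Next I would compute the trace moment by returning to the construction of $\textbf{H}$ as the upper-left $m_r\times m_t$ block of a Haar-distributed unitary $\textbf{G}\in U(m)$. The identity $\textbf{G}\textbf{G}^{\dag}=\textbf{I}_m$ combined with the exchangeability of the entries of $\textbf{G}$ under the Haar measure forces $\mathbb{E}[|G_{ij}|^2]=1/m$ for every $(i,j)$. Summing over the $m_r m_t$ block entries then gives
\begin{equation}
\mathbb{E}_{\textbf{H}}\!\left[\operatorname{tr}(\textbf{H}^{\dag}\textbf{H})\right] = \sum_{i=1}^{m_r}\sum_{j=1}^{m_t}\mathbb{E}[|H_{ij}|^2] = \frac{m_t m_r}{m},
\end{equation}
and substituting back yields the advertised approximation $C_{m_t,m_r}^{m,\rho}\approx \rho\,m_t m_r/(m\ln 2)$.

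I do not expect any serious obstacle here: both ingredients are essentially one-line facts, and the a.s. bound $\lambda_i\in[0,1]$ avoids any dominated-convergence subtleties. The only sanity check I would perform is consistency with the low-SNR remark in the proof of Lemma~\ref{lem:1}: one could equally well Taylor-expand the logarithm inside the lower bound \eqref{eq:lb}, collapse the factor $(1-x)\,P_{r-1}^{a+1,b+1}(x)$ using a recurrence of the type \eqref{eq:form1}, and then reduce the resulting integral to one involving $\|P_{r-1}^{a,b}\|^2$ via Jacobi orthogonality; this alternative route should reproduce the same coefficient $m_t m_r/(m\ln 2)$ but requires noticeably more polynomial manipulation, so I would present the Haar-trace argument as the main proof and mention the polynomial calculation only as a cross-check.
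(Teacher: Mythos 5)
Your proposal is correct, but it takes a genuinely different route from the paper. The paper proves Lemma~\ref{lem:2} by staying inside the Jacobi-polynomial framework: it Taylor-expands the logarithm in the Christoffel--Darboux expression \eqref{eq:cap3} of Theorem~\ref{th:1}, kills the term $\int_{-1}^{1}(1-x)^{a+1}(1+x)^{b}P_{r}^{a,b}(x)P_{r-2}^{a+1,b+1}(x)\,dx$ by the orthogonality argument already used for Lemma~\ref{lem:1}, rewrites the remaining integrand via the recurrence \eqref{eq:form1} and the change of variable $x\mapsto -x$, and evaluates the resulting integral with the closed form \eqref{eq:form2}, obtaining $\frac{\rho}{\ln 2}\frac{(a+r)r}{(a+b+2r)}=\frac{\rho\,m_r m_t}{m\ln 2}$; the case $m_r+m_t>m$ is then handled separately through \eqref{eq:capmsup}. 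You instead bypass all the polynomial machinery: you linearize $\log_2\det(\textbf{I}_{m_t}+\rho\textbf{H}^{\dag}\textbf{H})$ using the almost-sure bound $\lambda_i\in[0,1]$ (valid because $\textbf{H}$ is a block of a unitary matrix, so its singular values are at most one), reducing the problem to $\mathbb{E}[\operatorname{tr}(\textbf{H}^{\dag}\textbf{H})]$, and compute that moment from the Haar structure via $\mathbb{E}[|G_{ij}|^2]=1/m$ (row normalization plus permutation invariance of the Haar measure). Both steps are sound, and your argument has two genuine advantages: it gives explicit uniform $O(\rho^2)$ error control, and it treats $m_r+m_t\le m$ and $m_r+m_t>m$ in one stroke, whereas the paper needs the reflection identity \eqref{eq:capmsup} for the second case. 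What the paper's route buys is internal coherence: it showcases the utility of the new expression \eqref{eq:cap3} and, as a by-product of the same orthogonality computation, shows that the lower bound \eqref{eq:lb} is asymptotically tight at low SNR, which your trace argument does not address. Your suggested cross-check through \eqref{eq:lb}, \eqref{eq:form1} and Jacobi orthogonality is essentially the paper's proof, so presenting the Haar-trace argument as primary and that calculation as a verification is a reasonable, and arguably cleaner, organization.
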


Thus, to maximize the capacity at low SNR, the number of transmit and receiving cores/modes must be maximized and the number of unused cores/modes must be minimized.

\begin{proof}
For $m_r+m_t \leq m$,

Firstly, with the first order Taylor expansion of the logarithm, $ \ln(1+x) \underset{0}{\sim}x$, \eqref{eq:cap3} becomes:

\begin{multline}
\label{eq:capapp1}
	C(m_{t},m_{r},m,\rho) = \frac{\rho M_{r}^{a,b}}{2\ln(2)} \int_{-1}^{1} (1-x)^{a+1} (1+x)^{b}\\
 \left[ P_{r-1}^{a,b}(x)P_{r-1}^{a+1,b+1}(x)-N_{r}^{a,b}P_{r}^{a,b}(x)P_{r-2}^{a+1,b+1}(x) \right]dx
\end{multline}

It has already been proved that:

\begin{equation}
 \int_{-1}^{1} (1-x)^{a+1} (1+x)^{b}P_{r}^{a,b}(x) P_{r-2}^{a+1,b+1}(x)dx =0
\end{equation}

Then, with \eqref{eq:form1} and by performing the variable change $x = -x$, the expression of the capacity at low SNR becomes:

\begin{multline}
C(m_{t},m_{r},m,\rho) = \frac{\rho M_{r}^{a,b}}{2\ln(2)}\int_{-1}^{1}  (1-x)^{b}(1+x)^{a+1}\\
 \times\left[ A P_{r-1}^{b,a+1}(x)P_{r-1}^{b+1,a+1}(x) +B P_{r-2}^{b,a+1}(x)P_{r-1}^{b+1,a+1}(x) \right]dx
\end{multline}

Where 

\begin{equation}
A = \frac{(r+a+b)}{(2r+a+b-1)} \qquad B = \frac{(r+b-1)}{(2r+a+b-1)}
\end{equation}

Finally, we can compute this integral with \cite{integrals}:

\begin{multline}
\label{eq:form2}
\int_{-1}^1 (1-x)^{c} (1+x)^{b} P_n^{a,b}(x)P_m^{c,b}(x) \; dx = \\
\frac{2^{b+c+1}(a+b+m+n)!(b+n)!(c+m)!(a-c+n-m-1)!}{m!(n-m)!(a+b+n)!(b+c+m+n+1)!(a-c-1)!}
\end{multline}

Finally, \ref{eq:form2} proves that at low SNR, the expression of the ergodic capacity is:

\begin{equation}
C(m_{t},m_{r},m,\rho) =\frac{\rho}{\ln(2)}\frac{(a+r)r}{(a+b+2r)}
\end{equation}

Which proves lemma \ref{lem:2} in the case where $m_r+m_t\leq m$. This result can be extended to the case $m_r+m_t >m$ with the first order Taylor expansion of \eqref{eq:capmsup}.
\end{proof}

\section{Numerical results}

In this section, we provide a set of Matlab simulations that illustrate the theoretical results presented in the previous sections. We first suppose that the number of supported modes in the fiber is equal to $32$ and that the SNR varies between 0 and 30 dB. In a first step, the ergodic capacity is computed using the expression proposed in theorem \ref{th:1}, then the computation is made using \eqref{eq:cap1} proved in \cite{Dar}. The obtained results with the two formulas, for different values of $m_r$ and $m_t$, are drawn in figure \ref{fig:comp}. 

 \begin{figure}[!ht]

\centering
\includegraphics[scale=0.5]{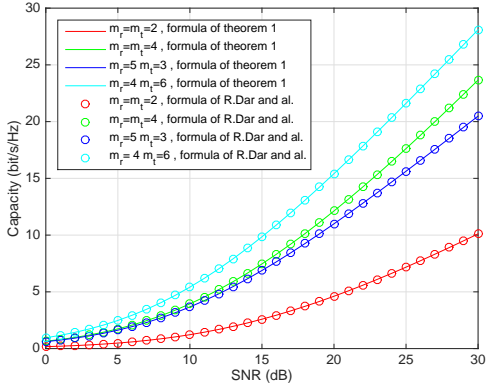}

\caption{Numerical evaluation of the ergodic capacity when $m=32$}
\label{fig:comp}
\end{figure}

Figure \ref{fig:comp} shows that the two expressions of the ergodic capacity ( Eq.\eqref{eq:cap1} and Eq.\eqref{eq:cap3}) produce the same simulation results. Furthermore, the proposed expression of the ergodic capacity reduces the evaluation time without impacting the results. Note that $m_r$ and $m_t$ have a symmetrical role in the expression of the ergodic capacity. Also, the evaluation time is drastically reduced when the parameter $r = min\{m_{r},m_{t}\}$ is large.

 \begin{figure}[!ht]

\centering
\includegraphics[scale=0.5]{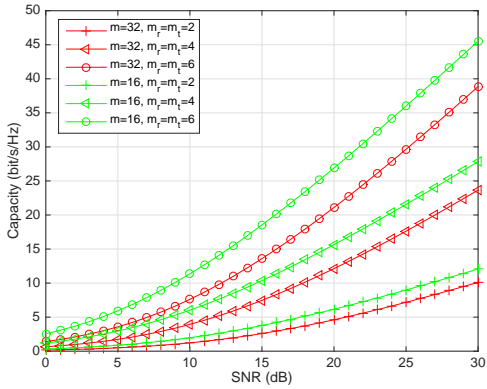}

\caption{Impact of unused cores/modes on the ergodic capacity}
\label{fig:unused}
\end{figure}

In figure \ref{fig:unused}, we analyze the effect of unused modes on the channel ergodic capacity. We compare the evolution of the capacity where $m=16$ and $m=32$ for different values of $m_r$ and $m_t$. For a fixed number of transmit and receiving cores/modes, the channel ergodic capacity decreases as $m$ increases. In other words, the capacity decreases as the number of unused cores/modes increases. This observation allows to generalize the observation done at low SNR.

 \begin{figure}[!ht]

\centering
\includegraphics[scale=0.5]{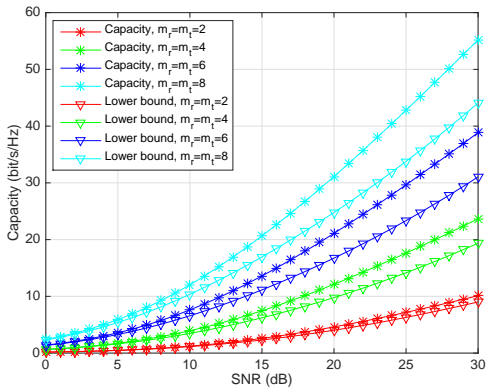}

\caption{Simulated ergodic capacity and analytical lower bound against the SNR when $m=32$ and $m_t=m_r$}

\label{fig:lb}
\end{figure}

Using the same simulation parameters ($m$, $m_r$, $m_t$ and the SNR variation range) as in figure \ref{fig:comp}, the channel ergodic capacity and its lower-bound are depicted against the SNR in figure \ref{fig:lb}. As anticipated, at low SNRs, the lower-bound of the ergodic capacity is very close to the ergodic capacity. However, them increases as  the capacity gap between them increases when SNR is large.

\begin{figure}[!ht]
\centering
\includegraphics[scale=0.5]{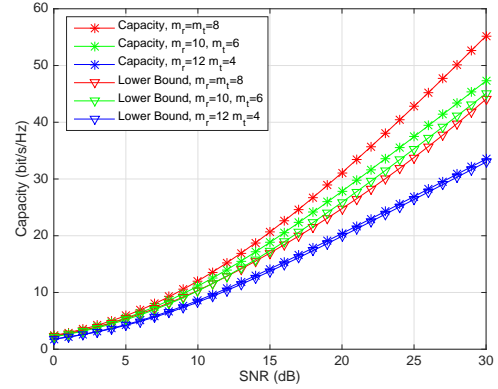}
\caption{Simulated ergodic capacity and analytical lower bound against the SNR when $m=32$ and $m_t \neq m_r$ }
\label{fig:diff}
\end{figure}

In figure \ref{fig:diff}, we analyze the evolution of the lower-bound of the ergodic capacity for different values of $m_r$ and $m_t$ when $m_r+m_t=16$ and $m=32$. The gap between the ergodic capacity and the proposed lower bound expression decreases as $a=|m_r-m_t|$ is large. Moreover, this capacity gap reaches a maximum when $m_r=m_t$.

\section{Conclusion}

In this paper, we used the Jacobi MIMO channel to analyze the ergodic capacity of the propagation channel in a multi-mode/multi-core optical fiber. We first used the Christoffel-Darboux formula to reformulate the expression of the ergodic capacity. The proposed new expression reduces the computation time to evaluate the ergodic capacity. Moreover, this expression has been used to propose a lower-bound of the capacity. We finally derived a very simple low SNR approximation of the ergodic capacity.

\bibliographystyle{ieeetr}
\bibliography{biblio_letter}

\begin{thebibliography}{1}

\bibitem{MMF}
K.~Ho and J.~Kahn, ``Statistics of group delays in multimode fiber with strong
  mode coupling,'' {\em Journal of Lightwave Technology}, vol.~29, no.~21,
  pp.~3119--3128, 2011.

\bibitem{SDM}
F.~J. Richardson, D.~J. and L.~E. Nelson, ``Space-division multiplexing in
  optical fibers,'' {\em Nature Photonics}, vol.~59, pp.~354--362, April 2013.

\bibitem{Telatar}
E.~Telatar, ``Capacity of multi-antenna {G}aussian channels,'' {\em European
  {T}ransactions on {T}elecommunications}, vol.~10, no.~6, pp.~585--596, 1999.

\bibitem{Winzer}
P.~J. Winzer and G.~J. Foschini, ``Mimo capacities and outage probabilities in
  spatially multiplexed optical transport systems,'' {\em Opt. Express},
  vol.~19, pp.~16680--16696, Aug 2011.

\bibitem{Dar}
R.~Dar, M.~Feder, and M.~Shtaif, ``The jacobi mimo channel,'' {\em Information
  Theory, IEEE Transactions on}, vol.~59, pp.~2426--2441, April 2013.

\bibitem{Andrews}
A.~G. E., A.~Richard, and R.~Ranjan, {\em Special functions}, vol.~Encyclopedia
  of mathematics and its applications.
\newblock Cambridge University Press, 1999.

\bibitem{Collins}
B.~Collins, ``Product of random projections, {Jacobi} ensembles and
  universality problems arising from free probability,'' {\em Probability
  Theory and Related Fields}, vol.~133, pp.~315--344, Nov. 2005.

\bibitem{NIST:DLMF}
``{NIST Digital Library of Mathematical Functions}.'' http://dlmf.nist.gov/,
  Release 1.0.10 of 2015-08-07.

\bibitem{integrals}
I.~S. Gradshteyn, I.~M. Ryzhik, A.~Jeffrey, D.~Zwillinger, and
  I.~Scripta~Technica, {\em Table of integrals, series, and products}.
\newblock Amsterdam, Boston, Paris, et al.: Elsevier, 2007.

\end{thebibliography}

\end{document}